\newtheorem[L,rightmargin=6pt,thickness=1.5pt]{thmchapter}{Theorem}[section]
\def\N{\mathbb{N}} 
\def\C{\mathbb{C}} 
\def\R{\mathbb{R}} 
\def\ell{\textit{l}} 
\def\wp1{\mathrm{w.p.} 1}  
\newtheorem{theoreme}{Theorem}[section]
\newtheorem*{lemme}{Useful lemmas.}
\newtheorem{proposition}[theoreme]{Proposition}
\newtheorem{definition}[theoreme]{Definition\rm}
\newtheorem*{remarque}{\it Remark}
\newtheorem*{intuition}{\it Intuition:\/}
\def\cc#1{\setlength{\tabcolsep}{0pt}\begin{tabular}{c}#1\end{tabular}}
\newcommand{\figc}[2][]
   {\setlength{\tabcolsep}{0pt}\begin{tabular}{c}\includegraphics[#1]{#2}\end{tabular}}
\def\yaxis#1{\cc{\rotatebox{90}{{\small #1}}}}
\definecolor{indigo}{rgb}{.29,0.,0.51}
\begin{document}


\title{Travelling salesman-based variable density sampling}

\author{\IEEEauthorblockN{Nicolas Chauffert, Philippe Ciuciu}
\IEEEauthorblockA{CEA, NeuroSpin center, \\
INRIA Saclay, PARIETAL Team \\
145, F-91191 Gif-sur-Yvette, France\\
Email: firstname.lastname@cea.fr}
\and
\IEEEauthorblockN{Jonas Kahn}
\IEEEauthorblockA{Laboratoire Painlev\'e, UMR 8524\\
    Universit\'e de Lille 1, CNRS\\
 Cit\'e Scientifique B\^at. M2\\
 59655 Villeneuve d'Ascq Cedex, France\\
Email: jonas.kahn@math.univ-lille1.fr}
\and
\IEEEauthorblockN{Pierre Weiss}
\IEEEauthorblockA{ITAV, USR 3505\\
PRIMO Team, \\
Universit\'e de Toulouse, CNRS\\
Toulouse, France\\
Email: pierre.weiss@itav-recherche.fr}}
\maketitle

\begin{abstract}
Compressed sensing theory indicates that selecting a few measurements independently at random is a near optimal
strategy to sense sparse or compressible signals. 
This is infeasible in practice for many acquisition devices that acquire samples along \textit{continuous} trajectories. 
Examples include magnetic resonance imaging (MRI), radio-interferometry, mobile-robot sampling, ...
In this paper, we propose to generate continuous sampling trajectories by drawing a small set of measurements
independently and joining them using a travelling salesman problem solver.
Our contribution lies in the theoretical derivation of the appropriate probability density of the initial drawings.
Preliminary simulation results show that this strategy is as efficient as independent drawings while being
implementable on real acquisition systems.
\end{abstract}

\IEEEpeerreviewmaketitle

\section{Introduction}

Compressed sensing theory provides guarantees on the reconstruction quality of sparse and compressible signals $x\in \R^n$ from a limited number of linear measurements $(\langle a_k, x\rangle)_{k\in K}$.
In most applications, the measurement or acquisition basis $A=(a_k)_{k\in \{1,\cdots,n\}}$ is fixed (e.g. Fourier or Wavelet basis). 
In order to reduce the acquisition time, one then needs to find a set $K$ of minimal cardinality 
that provides satisfactory reconstuction results.
It is proved in~\cite{candes2006near,rauhut2010compressive} that a good way to proceed consists of drawing the indices of $K$ independently at random according to a distribution $\tilde \pi$ that depends on the sensing basis $A$.
This result motivated a lot of authors to propose variable density random sampling strategies (see e.g.~\cite{lustig2007sparse,knoll2011adapted,puy2011variable,krahmer2012beyond,chauffert2013}).
Fig.~\ref{fig:Distributions}(a) illustrates a typical sampling pattern used in the MRI context. 
Simulations confirm that such schemes are efficient in practice. 
Unfortunately, they can hardly be implemented on real hardware where the physics of the acquisition processes imposes \emph{at least} continuity of the sampling trajectory and sometimes a higher level of smoothness. Hence, actual CS-MRI solutions relie on adhoc
solutions such as random radial or randomly perturbed spiral trajectories to impose gradient continuity. 
Nevertheless these strategies strongly deviate from the theoretical setting and experiments confirm their practical suboptimality.

In this work, we propose an alternative to the independent sampling scheme. 
It consists of picking a few samples independently at random according to a distribution $\pi$ and joining them using a travelling salesman problem~(TSP) solver in order to design continuous trajectories. The main theoretical result of this paper states that $\pi$ should be proportional to $\tilde \pi^{d/(d-1)}$ where $d$ denotes the space dimension~(e.g. $d=2$ for 2D images, $d=3$ for 3D images) in order to emulate an independent drawing from distribution $\tilde \pi$. Similar ideas were previously proposed in the literature~\cite{wang2012smoothed}, but it seems that no author made this central observation. 

The rest of this paper is organized as follows. The notation and definitions are introduced in Section~\ref{notations}. Section~\ref{main} contains the main result of the paper along with its proof. Section~\ref{algo} shows how the proposed theory can be implemented in practice. Finally, Section~\ref{sec:results} presents simulation results in the MRI context. 
 
\section{Notation and definitions}
\label{notations}

We shall work on the hypercube $\Omega = [0,1]^d$ with $d \geq 2$. Let $m\in \N$. The set $\Omega$ will be partitionned in $m^d$ congruent hypercubes $(\omega_i)_{i\in I}$ of edge length $1/m$.
In what follows, $\left\{ x_i \right\} _{i\in \mathbb{N}^* }$ denotes a sequence of points in the hypercube $\Omega$, independently drawn from a density $\pi:\Omega \mapsto \R_+$. The set of the first $N$ points is denoted $X_N = \left\{ x_i \right\} _{i\leqslant N}$. 
For a set of points $F$, we consider the solution to the TSP, that is the shortest Hamiltonian path between those points.
We denote $T(F)$ its length. For any set $R\subseteq \Omega $ we define $T(F, R) = T(F \cap R)$.

We also introduce $C(X_N, \Omega )$ for the optimal curve itself, and $\gamma _N: [0,1] \to \Omega $ the function that parameterizes $C(X_N, \Omega )$ by moving along it at constant speed $T(X_N, \Omega )$. 

The Lebesgue measure on an interval $[0,1]$ is denoted $\lambda _{[0,1]}$. We define the \textit{distribution of the TSP solution as follows}.


\begin{definition}
    The distribution of the TSP solution is denoted $\tilde{\Pi}_N$ and defined, for any Borelian $B$ in $\Omega $ by:
    \begin{align*}
        \tilde{\Pi}_N(B) & = \lambda _{[0,1]} \left( \gamma _N^{-1} (B) \right).
    \end{align*}
\end{definition}

\begin{remarque}
    The distribution $\tilde{\Pi}_N$ is defined for fixed $X_N$. It makes no reference to the stochastic component of $X_N$.
\end{remarque}

In order to prove the main result, we need to introduce other tools. 
For a subset $\omega_i \subseteq \Omega $, we denote the length of $C(X_N, \Omega )\cap \omega_i$ as $T_{|\omega _i}(X_N, \Omega )=T(X_N, \Omega ) \tilde{\Pi}_N(\omega _i)$. Using this definition, it follows that:
\begin{equation}
\label{eq:defalternative}
\tilde{\Pi}_N(B) = \frac{T_{|B}(X_N, \Omega )}{T(X_N,\Omega)}, \ \forall B. 
\end{equation}

Let $T_B(F, R)$ be the length of the boundary TSP on the set $F \cap R$. 
The boundary TSP is defined as the shortest Hamiltonian tour on $F \cap R$ for the metric obtained from the Euclidean metric by the quotient of the boundary of $R$, that is $d(a,b) = 0$ if $a, b \in \partial R$. Informally, it matches the original TSP while being allowed to travel along the boundary for free. We refer to~\cite{yukich_gutin2002traveling} for a complete description of this concept.

\section{Main theorem}
\label{main}

Our main theoretical result reads as follows.

\begin{theoreme}
    \label{convergence_proba}
Define the density $\tilde \pi = \frac{\pi^{(d-1)/d}}{\int_{\Omega} \pi^{(d-1)/d}(x) dx}$.  Then almost surely with respect to  the law $\pi^{\otimes \mathbb{N} }$ of the sequence $\{x_i\}_{i\in \mathbb{N} ^*}$ of random points in the hypercube, the distribution $\tilde{\Pi}_N$ converges in distribution to $\tilde \pi$:
    \begin{align}
        \label{convForm}
        \tilde \Pi_N & \stackrel{(d)}{\rightarrow} \tilde \pi & \mbox{$\pi^{\otimes \mathbb{N}}$-a.s.}
    \end{align}
\end{theoreme}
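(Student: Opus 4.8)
The plan is to reduce the statement to the Beardwood--Halton--Hammersley (BHH) theorem together with its boundary-functional refinement, both available in~\cite{yukich_gutin2002traveling}. Recall that BHH provides a constant $\beta_d>0$, independent of the density, such that for $N$ points drawn i.i.d.\ from a density $f$ on $\Omega=[0,1]^d$ with $d\ge 2$,
\begin{equation*}
\frac{T(X_N,\Omega)}{N^{(d-1)/d}} \longrightarrow \beta_d \int_{\Omega} f(x)^{(d-1)/d}\,dx \quad \text{a.s.},
\end{equation*}
and that the boundary functional $T_B$ is superadditive, satisfies $T_B\le T$, and shares the very same limit constant $\beta_d$. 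The argument hinges on combining a global application of this result (for the denominator of $\tilde\Pi_N$) with a localized one on each cell $\omega_i$ of the partition (for the numerators).

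First I would fix the resolution $m$ and use the representation $\tilde\Pi_N(\omega_i) = T_{|\omega_i}(X_N,\Omega)/T(X_N,\Omega)$ from~\eqref{eq:defalternative}. Applying BHH globally gives $T(X_N,\Omega)\sim \beta_d N^{(d-1)/d}\int_\Omega \pi^{(d-1)/d}$. For the numerator I would proceed in two steps. The key geometric observation is the lower bound
\begin{equation*}
T_{|\omega_i}(X_N,\Omega) \ge T_B(X_N\cap\omega_i,\omega_i),
\end{equation*}
which holds because the optimal global curve, once intersected with $\omega_i$, is a finite union of arcs whose endpoints lie on $\partial\omega_i$; closing these arcs up along the boundary --- which is free in the boundary metric --- produces a feasible boundary tour on $X_N\cap\omega_i$ whose cost equals the interior length, hence one no shorter than the optimal $T_B$. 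Next, conditionally on the random number $N_i$ of points falling in $\omega_i$, those points are i.i.d.\ from $\pi$ renormalized on $\omega_i$; since $N_i/N\to\int_{\omega_i}\pi$ a.s.\ by the strong law, rescaling $\omega_i$ to the unit cube and applying the boundary version of BHH yields
\begin{equation*}
\frac{T_B(X_N\cap\omega_i,\omega_i)}{N^{(d-1)/d}} \longrightarrow \beta_d \int_{\omega_i} \pi^{(d-1)/d} \quad\text{a.s.},
\end{equation*}
the powers of $m$ coming from the length scaling, the Jacobian and $N_i^{(d-1)/d}$ cancelling exactly.

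Combining the two displays gives, for every cell, $\liminf_N \tilde\Pi_N(\omega_i)\ge \tilde\pi(\omega_i)$. The elegant point is that no matching per-cell upper bound is needed: since $\sum_{i} \tilde\Pi_N(\omega_i)=1=\sum_i \tilde\pi(\omega_i)$ and the partition is finite, the liminf inequalities are forced to be equalities, because $\limsup_N\tilde\Pi_N(\omega_i)\le 1-\sum_{j\ne i}\liminf_N\tilde\Pi_N(\omega_j)\le \tilde\pi(\omega_i)$. Hence $\tilde\Pi_N(\omega_i)\to\tilde\pi(\omega_i)$ for each $i$. Running this over all dyadic resolutions $m=2^k$ simultaneously --- a countable intersection of almost sure events, hence still almost sure --- we obtain convergence of $\tilde\Pi_N$ against every dyadic step function; since $\tilde\pi$ is absolutely continuous and the cell boundaries are $\tilde\pi$-null, this upgrades to weak convergence, that is, to~\eqref{convForm}.

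The main obstacle is the localized BHH estimate and the lower bound feeding it: one must show that the portion of the \emph{global} optimizer lying in a cell is captured, asymptotically, by the \emph{intrinsic} boundary TSP of that cell's points, and justify this while the per-cell counts $N_i$ are themselves random. Everything else --- the global BHH limit and the normalization argument --- is then comparatively routine.
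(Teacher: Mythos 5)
Your argument is correct and rests on the same three pillars as the paper's proof (the BHH limit applied globally and per cell, the boundary functional as a lower bound on the local share $T_{|\omega_i}$ of the global tour, and a finite-partition-then-refine passage to weak convergence), but it closes the per-cell estimate by a genuinely different and somewhat leaner route. The paper proves \emph{two-sided} control of $T_{|\omega_i}(X_N,\Omega)$ by sandwiching the sums $\sum_i T_B(X_N,\omega_i)\leqslant \sum_i T_{|\omega_i}(X_N,\Omega)\leqslant \sum_i T(X_N,\omega_i)+O(N^{(d-2)/(d-1)})$, which requires both the superadditivity of the boundary functional \eqref{superadditive} and the grid-approximation estimate \eqref{approx}; you instead prove only the one-sided bound $\liminf_N\tilde\Pi_N(\omega_i)\geqslant\tilde\pi(\omega_i)$ from \eqref{boundlocal} and \eqref{bound_approx}, and let the normalization $\sum_i\tilde\Pi_N(\omega_i)=1=\sum_i\tilde\pi(\omega_i)$ over the finite partition force the matching upper bound. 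This trades two of the paper's four TSP lemmas for an elementary measure-theoretic observation, which is a real simplification. Your final step also differs cosmetically: you run over all dyadic resolutions and test against step functions (a convergence-determining class / uniform-continuity argument), while the paper fixes one scale per $\varepsilon$ and builds an explicit coupling to bound the Prokhorov distance; both are standard and valid. Two small points worth tightening if you write this up: the arcs of the global Hamiltonian \emph{path} inside $\omega_i$ may include up to two arcs with an interior endpoint (the endpoints of the path), so the feasible boundary tour you construct costs $T_{|\omega_i}(X_N,\Omega)+O(1)$ rather than exactly the interior length --- harmless after dividing by $N^{(d-1)/d}$; and the per-cell application of BHH should be phrased along the random subsequence $N_i\to\infty$, with the degenerate cells $\pi(\omega_i)=0$ handled trivially.
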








\begin{intuition}
Let us first provide a rough intuition of the result since the exact proof is technical. 
The distribution $\tilde{\Pi}_N$ in a small cube is the relative length of the TSP in this cube. The number of points $N_c$ in the cube is proportional to $\pi$. Approximately, the TSP connects the points with other points in the cube, typically their neighbours, since they are close. Now, the typical distance between two neighbours in the cube is proportional to $N_c^{-1/d}$ or $\pi^{-1/d}$. So that the total length of the TSP in the small cube is proportional to $\pi\pi^{-1/d} = \pi^{1 - 1/d} \propto \tilde{\pi}$.
\end{intuition}

The remainder of this section is dedicated to proving this result.
The following proposition is central to obtain the proof:

\begin{proposition}
    \label{cube_by_cube}
    Almost surely, for all $\omega_i$ in $\{\omega_i\}_{1\leq i \leq m^d}$:    
    \begin{align}
        \label{for_small}
        \lim_{N\to \infty} \tilde{\Pi}_N(\omega _i) & = \tilde{\pi}(\omega _i) \\
                                                   & = \frac{\int_{\omega _i} \pi^{(d-1)/d}(x) \mathrm{d}x}{\int_{\Omega} \pi^{(d-1)/d}(x) \mathrm{d}x} & \mbox{$\pi^{\otimes \mathbb{N}}$-a.s.}
    \end{align}
   
\end{proposition}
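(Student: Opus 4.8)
\emph{Approach.} The engine of the proof is the Beardwood--Halton--Hammersley (BHH) theorem together with its boundary-functional refinement, for which I would rely on~\cite{yukich_gutin2002traveling}. For an i.i.d. sequence drawn from the density $\pi$ on $\Omega$, this theorem provides, for every fixed measurable region $R$, a dimensional constant $\beta_d$ with
\begin{equation}
\label{plan:bhh}
\lim_{N\to\infty}\frac{T(X_N,R)}{N^{(d-1)/d}}=\lim_{N\to\infty}\frac{T_B(X_N,R)}{N^{(d-1)/d}}=\beta_d\int_R\pi^{(d-1)/d}(x)\,\mathrm{d}x\qquad\text{$\pi^{\otimes\N}$-a.s.}
\end{equation}
Applying this with $R=\omega_i$ to the boundary functional and with $R=\Omega$ to the ordinary functional, and setting $a_i:=\beta_d\int_{\omega_i}\pi^{(d-1)/d}$, I would obtain $T_B(X_N,\omega_i)/N^{(d-1)/d}\to a_i$ for each $i$ and $T(X_N,\Omega)/N^{(d-1)/d}\to\sum_i a_i$, the latter being mere additivity of the integral. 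Since there are finitely many cubes, all these limits hold simultaneously on a single almost-sure event.

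\emph{Key geometric step.} The only quantity not directly covered by~\eqref{plan:bhh} is $T_{|\omega_i}(X_N,\Omega)$, the length of the \emph{global} optimal curve inside $\omega_i$, which differs from $T(X_N,\omega_i)$. I would control it by two elementary facts. First, exact additivity of arc lengths along the curve gives $\sum_i T_{|\omega_i}(X_N,\Omega)=T(X_N,\Omega)$. Second, I claim the boundary lower bound
\begin{equation}
\label{plan:lb}
T_B(X_N,\omega_i)\le T_{|\omega_i}(X_N,\Omega).
\end{equation}
Indeed, $C(X_N,\Omega)\cap\omega_i$ is a finite union of sub-arcs that together visit every point of $X_N\cap\omega_i$ and whose endpoints lie on $\partial\omega_i$; joining these endpoints along $\partial\omega_i$ costs nothing in the boundary metric and produces an admissible Hamiltonian tour of $X_N\cap\omega_i$ of length $T_{|\omega_i}(X_N,\Omega)$, so minimality of $T_B$ yields~\eqref{plan:lb}.

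\emph{Squeezing and conclusion.} Write $c_i^{(N)}:=T_{|\omega_i}(X_N,\Omega)/N^{(d-1)/d}$. By~\eqref{plan:lb} and~\eqref{plan:bhh} one has $\liminf_N c_i^{(N)}\ge a_i$ for every $i$, while $\sum_i c_i^{(N)}\to\sum_i a_i$. For a finite family of nonnegative sequences these two facts force $c_i^{(N)}\to a_i$ for each $i$: isolating an index $j$ and using superadditivity of $\liminf$, $\limsup_N c_j^{(N)}\le\lim_N\sum_i c_i^{(N)}-\sum_{i\ne j}\liminf_N c_i^{(N)}\le\sum_i a_i-\sum_{i\ne j}a_i=a_j$. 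Dividing and using~\eqref{eq:defalternative},
\begin{equation}
\tilde\Pi_N(\omega_i)=\frac{T_{|\omega_i}(X_N,\Omega)}{T(X_N,\Omega)}=\frac{c_i^{(N)}}{\sum_j c_j^{(N)}}\longrightarrow\frac{a_i}{\sum_j a_j}=\frac{\int_{\omega_i}\pi^{(d-1)/d}}{\int_\Omega\pi^{(d-1)/d}}=\tilde\pi(\omega_i),
\end{equation}
which is exactly~\eqref{for_small}; note that $\beta_d$ cancels, so its value never enters.

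\emph{Main obstacle.} I expect the genuine difficulty to sit in two places. The first is justifying~\eqref{plan:bhh} in the precise form needed: the points falling in $\omega_i$ form a random number $N_i\approx N\int_{\omega_i}\pi$ of samples from the conditional law on $\omega_i$, so I must combine the strong law for $N_i$ with the a.s. BHH limit along the growing subsequence, and I must invoke the fact that the boundary functional shares the BHH constant of the ordinary one (this is exactly where~\cite{yukich_gutin2002traveling} does the heavy lifting, through the sub/super-additivity and smoothness of these Euclidean functionals). The second, more delicate, point is making~\eqref{plan:lb} rigorous: one must handle the null but nonempty events where a point of $X_N$ lies on $\partial\omega_i$ or the curve grazes a face, so that the decomposition of $C(X_N,\Omega)\cap\omega_i$ into finitely many arcs with boundary endpoints is clean. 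Everything else is bookkeeping, and the passage from this cube-by-cube statement to the weak convergence asserted in Theorem~\ref{convergence_proba} would then follow by refining the partition as $m\to\infty$.
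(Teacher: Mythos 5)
Your proof is correct and follows essentially the same route as the paper: the Beardwood--Halton--Hammersley limit identifies the normalized limits of $T(X_N,\Omega)$ and (via the $O(N^{(d-2)/(d-1)})$ closeness of the boundary functional) of $T_B(X_N,\omega_i)$, your lower bound $T_B(X_N,\omega_i)\le T_{|\omega_i}(X_N,\Omega)$ is exactly the paper's Eq.~\eqref{boundlocal}, and the finite squeeze based on $\sum_i T_{|\omega_i}(X_N,\Omega)=T(X_N,\Omega)$ is the paper's argument leading to Eq.~\eqref{allequal}. The only organizational difference is that you get the matching upper control directly from BHH applied to $\Omega$ together with additivity of the integral, whereas the paper routes through the superadditivity~\eqref{superadditive} and grid-approximation~\eqref{approx} lemmas; both are valid.
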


The strategy consists in proving that $T_{|\omega _i}(X_N, \Omega )$ tends asymptotically to $T(X_N, \omega _i)$. The estimation of each term can then be obtained by applying the asymptotic result of Beardwood, Halton and Hammersley~\cite{beardwood1959shortest}:
\begin{theoreme}
    \label{BHH}
    If $R$ is a Lebesgue-measurable set in $\mathbb{R}^d$ such that the boundary $\partial R$ has zero measure, and $\{y_i\}_{i\in \mathbb{N}^* }$, with $Y_N = \left\{ y_i \right\} _{i\leqslant N}$ is a sequence of i.i.d. points from a density $p$ supported on $R$, then, almost surely,
    \begin{align}
        \label{BHHeq}
        \lim_{N \to \infty} \frac{T(Y_N, R) }{N^{(d-1)/d}} & = \beta (d) \int_{R}  p^{(d-1)/d}(x) \mathrm{d}x,
    \end{align}
    where $\beta (d)$ depends on the dimension $d$ only.
\end{theoreme}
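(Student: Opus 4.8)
The plan is to recognize $T$ as a subadditive Euclidean functional and to deploy the standard machinery for such objects (in the spirit of Steele, Rhee--Talagrand and Yukich): first establish the statement for the uniform distribution on the cube, then transfer to an arbitrary density $p$ by a partitioning/Riemann-sum argument. The boundary functional $T_B$ introduced above plays a central role, supplying the matching lower bound that the bare TSP lacks. Concretely, I would first record the homogeneity of $T$: $N$ points in a cube of side $s$ have TSP length $s$ times that of the rescaled configuration, so $T$ scales like (edge length)$\times N^{(d-1)/d}$. Partitioning $[0,1]^d$ into the $m^d$ congruent subcubes $\omega_i$, subadditivity gives $T(X_N,[0,1]^d)\le \sum_i T(X_N,\omega_i)+ c\,m^{d-1}$, the error term bounding the cost of splicing the $m^d$ subtours together. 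Symmetrically, the boundary functional satisfies $T_B(X_N,[0,1]^d)\ge \sum_i T_B(X_N,\omega_i)$, while $0\le T-T_B$ is itself $O(m^{d-1})$, so $T$ and $T_B$ are asymptotically equivalent at the scale $N^{(d-1)/d}$.

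The second step is the existence of $\beta(d)$ together with almost-sure convergence in the uniform case. For i.i.d. uniform points $U_N$ on $[0,1]^d$, the subadditive relation combined with self-similarity yields $\mathbb{E}\bigl[T(U_N)\bigr]/N^{(d-1)/d}\to\beta(d)$ for a finite constant depending only on $d$. To upgrade this convergence in mean to an almost-sure statement I would use a bounded-difference argument: moving a single one of the $N$ points changes $T(U_N)$ by a controlled amount, so Azuma's inequality gives exponential concentration of $T(U_N)$ about its mean; Borel--Cantelli along the sequence, combined with the near-monotonicity of $T$ in $N$, then gives $T(U_N)/N^{(d-1)/d}\to\beta(d)$ almost surely.

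The third step transfers from uniform points to the density $p$. Refine the partition into subcubes $\omega_i$ on which $p$ is nearly constant with value $p_i$ and volume $v_i$. The count of points in $\omega_i$ concentrates around $N p_i v_i$, and conditionally on this count the points are approximately uniform on $\omega_i$, so the uniform result gives a contribution $\approx \beta(d)\, v_i^{1/d}\,(N p_i v_i)^{(d-1)/d}=\beta(d)\,N^{(d-1)/d}\,p_i^{(d-1)/d}\,v_i$. Sandwiching $T(Y_N,R)$ between the superadditive lower bound $\sum_i T_B(Y_N,\omega_i)$ and the subadditive upper bound $\sum_i T(Y_N,\omega_i)+c\,m^{d-1}$, dividing by $N^{(d-1)/d}$, and letting first $N\to\infty$ and then the mesh $1/m\to 0$, both sums converge to the Riemann sum $\sum_i \beta(d)\,p_i^{(d-1)/d}v_i\to\beta(d)\int_R p^{(d-1)/d}$, which is the assertion.

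I expect the main obstacle to be precisely this transfer step: one must control simultaneously the gluing and boundary errors (which is exactly why the two-sided bound through $T_B$ is needed) and the fluctuations of the multinomial counts of points per subcube, while also handling the cells where $p$ is small or vanishes and the uniform approximation is weakest. The cleanest way to make the interchange of the two limits $N\to\infty$ and $1/m\to 0$ rigorous is a Poissonization of the sample together with a uniform-integrability or truncation argument on the low-density cells, so that their aggregate contribution is shown to be negligible at the scale $N^{(d-1)/d}$.
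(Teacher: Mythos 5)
You should first be aware that the paper itself contains no proof of this statement: it is the classical Beardwood--Halton--Hammersley theorem, imported verbatim from~\cite{beardwood1959shortest} and used as a black box in the proof of Proposition~\ref{cube_by_cube}. So your attempt can only be measured against the standard literature proofs. Your outline is indeed the canonical subadditive-Euclidean-functional route (Steele, Rhee--Talagrand, Yukich): scaling, subadditivity with an $O(m^{d-1})$ splicing cost, the superadditive boundary functional as the matching lower bound, the uniform case first, then transfer to a general density by a simple-function/Riemann-sum argument. One inaccuracy along the way: the closeness of $T$ and $T_B$ is controlled by the number of points, $|T-T_B|=O\bigl(n^{(d-2)/(d-1)}\bigr)$ as in the paper's Eq.~\eqref{bound_approx}, not by $O(m^{d-1})$; this is harmless, since both are $o\bigl(n^{(d-1)/d}\bigr)$.

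The genuine gap is your concentration step in dimension $d=2$ --- precisely the dimension the paper uses. Perturbing one point changes the tour length by at most a constant, so Azuma's inequality for the Doob martingale gives fluctuations of order $\sqrt{N}$. For $d\geqslant 3$ this is $o\bigl(N^{(d-1)/d}\bigr)$ and your Borel--Cantelli argument goes through, but for $d=2$ the mean itself is of order $N^{1/2}$, so bounded-difference concentration says nothing: $\Prob\bigl(|T(U_N)-\Exp T(U_N)|>\varepsilon N^{1/2}\bigr)$ is bounded only by a constant, which is not summable along any subsequence. You need a finer tool: either Steele's Efron--Stein bound, which uses the geometric fact that the sum of squared nearest-neighbour distances of $N$ points in the unit square is $O(1)$ to get $\mathrm{Var}\bigl(T(U_N)\bigr)=O(1)$, followed by Chebyshev along the subsequence $N_k=k^2$ and monotonicity of $T$ to fill the gaps; or the sharp deviation inequality of Rhee--Talagrand (the paper's reference~\cite{rhee1989sharp}); or the original BHH argument, which Poissonizes and exploits independence of the point configurations across cells so that a law of large numbers replaces concentration altogether. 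Your final paragraph correctly identifies the limit-interchange and low-density-cell issues in the transfer step, and Poissonization is indeed the standard cure there; but until the $d=2$ concentration step is repaired, the proof as written does not establish the theorem in the planar case, which is the case the paper's MRI application actually needs.
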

We shall use a set of classical results on TSP and boundary TSP, that may be found in the survey books~\cite{yukich_gutin2002traveling}
and~\cite{yukich1998probability}.
\begin{lemme}
Let $F$ denote a set of $n$ points in $\Omega $.

\begin{enumerate}
 \item The boundary TSP is superadditive, that is, if $R_1$ and $R_2$ have disjoint interiors.
\begin{align}
    \label{superadditive}
    T_B(F, R_1 \cup R_2) & \geqslant T_B(F, R_1) + T_B(F, R_2).
\end{align}
 \item The boundary TSP is a lower bound on the TSP, both globally and on subsets. If $R_2 \subset R_1$:
\begin{align}
    \label{boundT}
    T(F, R) & \geqslant T_B(F, R)    \\
    \label{boundlocal}    
    T_{|R_2}(F, R_1) & \geqslant T_B(F, R_2)
\end{align}
  \item The boundary TSP approximates well the TSP~\cite[Lemma $3.7$]{yukich1998probability}):

\begin{align}
    \label{bound_approx}
    |T(F,\Omega) - T_B(F,\Omega)| =  O(n^{(d-2)/(d-1)} ).
\end{align}
  \item The TSP in $\Omega$ is well-approximated by the sum of TSPs in a grid of $m^d$ congruent hypercubes~\cite[Eq.~(33)]{yukich_gutin2002traveling}.
\begin{align}
    \label{approx}
    \lvert T(F, \Omega ) - \sum_{i=1}^{m^d} T(F, \omega _i) \rvert = O(n^{(d-2)/(d-1)}).
\end{align}
\end{enumerate}
\end{lemme}
We now have all the ingredients to prove the main results.
\begin{proof}[Proof of Proposition \ref{cube_by_cube}]

\begin{align*}
    \sum_{i\in I} T_B(X_N,\omega_i) & \stackrel{\eqref{superadditive}}{\leqslant} T_B(X_N,\Omega)  \\
  & \stackrel{\eqref{boundT}}{\leqslant} T(X_N,\Omega) = \sum_{i\in I} T|_{\omega_i}(X_N,\Omega) \\
  & \stackrel{\eqref{approx}}{\leqslant} \sum_{i\in I} T(X_N,\omega_i) + O(N^{(d-1)/(d-2)})
\end{align*}
Let $N_i$ be the number of points of $X_N$ in $\omega _i$. 

Since $N_i \leqslant N$, we may use the bound \eqref{bound_approx} to get: 
\begin{equation}
\label{eq:9}
 \lim_{N\rightarrow \infty}\frac{T(X_N,\omega_i)}{N^{(d-1)/d}} =  \lim_{N\rightarrow \infty}\frac{T_B(X_N,\omega_i)}{N^{(d-1)/d}}.
\end{equation}
Using the fact that there are only finitely many $\omega _i$, the following equalities hold almost surely:
\begin{align*}
\lim_{N\rightarrow \infty}  \frac{\sum_{i\in I}T_B(X_N,\omega_i)}{N^{(d-1)/d}} 
     & = \lim_{N\rightarrow \infty} \frac{\sum_{i\in I}T(X_N,\omega_i)}{N^{(d-1)/d}} \\
     & \stackrel{\eqref{approx}}{=} \lim_{N\rightarrow \infty}  \frac{\sum_{i \in I}T_{|\omega_i}(X_N,\Omega )}{N^{(d-1)/d}}.
\end{align*}

Since the boundary TSP is a lower bound~(cf. Eqs.~\eqref{boundlocal}-\eqref{boundT}) to both local and global TSPs, the above equality ensures that:
\begin{align}
    \label{allequal}
    \lim_{N\rightarrow \infty}  \frac{T_B(X_N,\omega_i)}{N^{(d-1)/d}} & = \lim_{N\rightarrow \infty} \frac {T(X_N,\omega_i )}{N^{(d-1)/d}}\\
     & = \lim_{N\rightarrow \infty} \frac {T_{|\omega_i}(X_N,\Omega )}{N^{(d-1)/d}}   & \mbox{$\pi^{\otimes \mathbb{N} }$-a.s, $\forall i$}.\nonumber
\end{align}
Finally, by the law of large numbers, almost surely $N_i / N \to \pi(\omega _i)=\int_{\omega_i} \pi(x)dx$. 
The law of any point $x_j$ conditioned on being in $\omega _i$ has density $\pi / \pi(\omega_i)$. By applying Theorem \ref{BHH} to the hypercubes $\omega _i$ and $\Omega$ we thus get:
\begin{align*}
    \lim_{N\rightarrow +\infty} \frac{T(X_N,\omega_i)}{N^{(d-1)/d}} & = \beta(d) \int_{\omega_i} \pi(x)^{(d-1)/d}dx & \mbox{$\pi^{\otimes \mathbb{N} }$-a.s, $\forall i$}.
\end{align*}
and 
\begin{align*}
    \lim_{N\rightarrow +\infty} \frac{T(X_N,\Omega)}{N^{(d-1)/d}} & = \beta(d) \int_{\Omega} \pi(x)^{(d-1)/d}dx & \mbox{$\pi^{\otimes \mathbb{N} }$-a.s, $\forall i$}.
\end{align*}
Combining this result with Eqs.~\eqref{allequal} and \eqref{eq:defalternative} yields Proposition~\ref{cube_by_cube}.
\end{proof}

\begin{proof}[Proof of Theorem \ref{convergence_proba}]
    Let $\varepsilon > 0$ and $m$ be an integer such that $\sqrt{d} m^{-d} < \varepsilon$. Then any two points in $\omega _i$ are at distance less than $\varepsilon $.
    
    Using Theorem \ref{cube_by_cube} and the fact that there is a finite number of $\omega _i$, almost surely, we get:
        $\lim_{N\rightarrow +\infty} \sum_{i\in I} \left| \tilde{\Pi}_N(\omega_i) - \tilde \pi(\omega _i) \right| = 0$. 
 Hence, for any $N$ large enough, there is a coupling $K$ of $\tilde{\Pi}_N$ and $\tilde \pi$ such that both corresponding random variables are in the same $\omega _i$ with probability $1- \varepsilon $. 
Let $A\subseteq \Omega$ be a Borelian. The coupling satisfies $\tilde{\Pi}_N(A) = K(A \otimes \Omega)$ and $\tilde \pi(A) = K(\Omega \otimes A)$. Define the $\varepsilon$-neighborhood by $A^\varepsilon=\{X\in \Omega \, | \, \exists Y \in A, \  \|X-Y\|<\varepsilon \}$. Then, we have:
$\tilde{\Pi}_N(A) =K(A \otimes \Omega)=K(\{A \otimes \Omega\} \cap \{ |X - Y| < \varepsilon\})+ K(\{A \otimes \Omega\} \cap \{ |X - Y| \geqslant \varepsilon \})$. It follows that:
\begin{align*}
       \tilde{\Pi}_N(A)&\leqslant  K({A \otimes A^{\epsilon}}) + K(|X - Y| \geqslant \varepsilon) \\
       &\leqslant  K(\Omega \otimes A^{\varepsilon})  + \varepsilon 
=    \tilde \pi(A^{\varepsilon}) + \varepsilon.
\end{align*}

This exactly matches the definition of convergence in the Prokhorov metric, which implies convergence in distribution.
\end{proof}
\section{Algorithm}
\label{algo}
The results presented in the previous section can be used to design a continuous sampling pattern with a target density $\tilde \pi$. 
The following algorithm summarizes this idea.
\begin{algorithm}

  \KwIn{$\tilde \pi:\Omega\mapsto \R_+$: a target sampling density.\\
  			$N$: an initial number of drawings.\\}
  \KwOut{A continuous sampling curve $C$.}
  \Begin{Define $\pi = \frac{\tilde \pi^{d/(d-1)}}{\int_{\Omega} \tilde \pi^{d/(d-1)}(x) dx}$.\\
  			 Draw $N$ points independently at random according to density $\pi$. \\
  			 Link these points with a travelling salesman solver to generate the curve $C$.
      }
\caption{An algorithm to generate a continuous sampling pattern according to a target density. \label{algo:algo}}
\end{algorithm}

Applying this algorithm raises various questions: how to choose the target density $\tilde \pi$? How to set the initial number of points $N$? Can the travelling salesman problem be solved for millions of points? We give various hints to the previous questions below.

\paragraph{Choosing a density $\tilde \pi$} We believe that this question is still treated superficially in the literature and deserves attention. Various strategies can be considered. A common empirical method consists in learning a density on image databases \cite{knoll2011adapted}. In the cases of Fourier measurements, this leads to the use of polynomially decreasing densities from low to high frequencies. 
The same strategy was proposed in \cite{lustig2007sparse} with no theoretical justification. The compressed sensing results allow to derive mathematically founded densities \cite{rauhut2010compressive,puy2011variable}. However, as outlined in \cite{chauffert2013}, an important ingredient is missing for these theories to provide good reconstruction results. The standard CS theory relies on the hypothesis that the signal is sparse, with no assumption on the sparsity structure. This makes the current theoretically founded sampling strategies highly sub-optimal. Recent works partially address this problem (see e.g. the review paper \cite{duarte2011structured}). However, to the best of our knowledge, the recent focus is on modifying the reconstruction algorithm, rather than deriving optimal sampling patterns.

\paragraph{Choosing an initial number of points $N$}

In applications, one usually wishes to sample $\tilde N$ points out of the $n$ possible ones. One should thus choose $N$ so that the discretized TSP trajectory contains $\tilde N$ points. This problem is well studied in the TSP literature \cite{beardwood1959shortest,rhee1989sharp}. Theorem \ref{BHH} ensures that the length of the TSP trajectory obtained by drawing $N$ points should be close to $L(N)=N^{(d-1)/d} \beta (d) \int_{R}  p^{(d-1)/d}(x) \mathrm{d}x$ where $\beta(d)$ can be evaluated numerically. Concentration results by Talagrand \cite{rhee1989sharp} show that this approximation is very accurate for moderate to large values of $N$. In order to obtain a discrete set of measurements from the continuous trajectory generated by Algorithm \ref{algo:algo}, we may discretize it with a stepsize $\Delta t$. The total number of points sampled is thus $N_s\simeq \lfloor \frac{L(N)}{\Delta t} \rfloor$ if an arclength parameterization is used. A possible way of obtaining approximately $\tilde N$ samples is thus to set:
\begin{equation}
N= \lfloor \Delta t L^{-1}(\tilde N)\rfloor.
\end{equation}
\paragraph{Solving the TSP}

Designing algorithms to solve the TSP is a widely studied problem. The book \cite{yukich_gutin2002traveling} provides a comprehensive review of exact and approximate algorithms. The TSP is known to be NP-hard and we cannot expect to solve it exactly for a large number of points $N$. From a theoretical point of view, Arora \cite{arora1998polynomial} shows that the TSP solution can be approximated to a factor $(1+\epsilon)$ with a complexity $O(N\log(N)^{1/\epsilon})$. 
From a practical point of view, there exist many heuristic algorithms that perform well in practice. The heuristics range from those that get within a few percent of optimum for 100,000-city instances in seconds to those that get within fractions of a percent of optimum for instances of this size in a few hours. In our experiments, we used a genetic algorithm~\cite{merz97}.

\section{Simulation results in MRI}
\label{sec:results}

The proposed sampling algorithm was assessed in a 2D MRI acquisition setup where images are sampled in the 2D Fourier
domain and compressible in the wavelet domain. Hence, $A=\mathcal{F}^*\Psi$ where $\mathcal{F}^*$ and $\Psi$ denote the discrete Fourier and inverse discrete wavelet transform, respectively.
Following~\cite{chauffert2013}, it can be shown that a near optimal sampling strategy consists of probing  $m$ independent samples of the 2D Fourier plane $(k_x,k_y)$ drawn independently from a target density $\tilde \pi$. The image is then reconstructed by solving the following $\ell^1$ problem using a Douglas-Rachford algorithm:
\begin{equation*}
 x^*=\mathop{\mathrm{argmin}}_{A_m x=y} \|x\|_1
\end{equation*}
where $A_m\in \C^{m\times n}$ is the sensing matrix, $x^*\in \C^n$ is the reconstructed image and $y\in \C^n$ is the collected data.
A typical realization is illustrated in Fig.~\ref{fig:Distributions}(a) which in practice cannot be implemented since MRI requires probing samples along continuous curves.
To circumvent such difficulties, a TSP solver was applied to such realization in order to join all samples through a countinuous trajectory, as illustrated in Fig.~\ref{fig:Distributions}(c). Finally, Fig.~\ref{fig:Distributions}(e) shows a curve generated by a TSP solver after drawing the same amount of Fourier samples from the density $\tilde \pi^2$ as underlied by Theorem~\ref{convergence_proba}. In all sampling schemes the number of probed Fourier coefficients was equal to one fifth of the total number~(acceleration factor $r=5$).

Figs.~\ref{fig:Distributions}(b,d,f) show the corresponding reconstruction results. It is readily seen that an independent random drawing from $\tilde \pi^2$ followed by a TSP-based solver yields promising results. Moreover, a dramatic improvement of $10$dB was obtained compared to the initial drawing from $\tilde \pi$.

\begin{figure}
\begin{center}
\begin{tabular}{cc}
\yaxis{$k_y$} \figc[width=.22\textwidth]{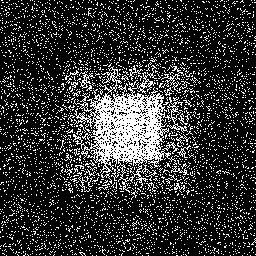}& \figc[width=.22\textwidth]{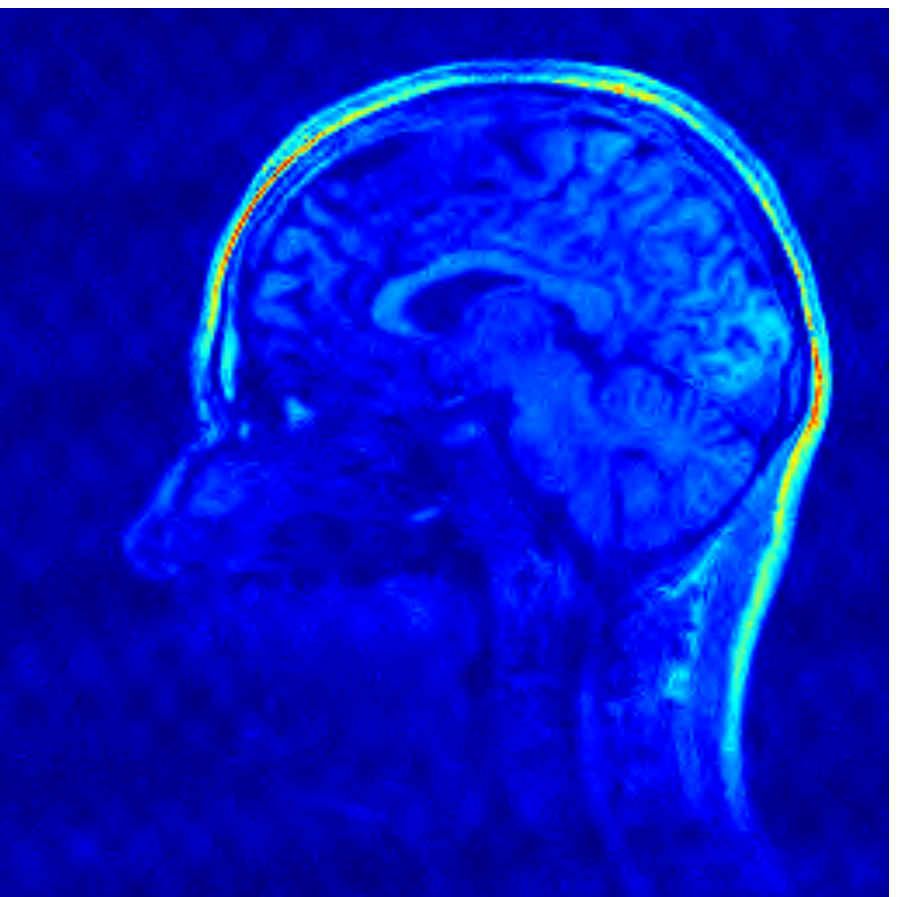}\\[-4.5cm]
{\small (a)} & {\small (b) SNR=33.0dB}\\[4.4cm]
\yaxis{$k_y$} \figc[width=.22\textwidth]{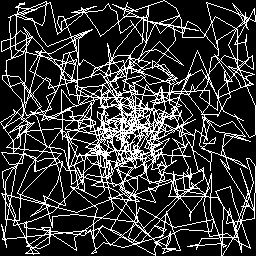}& \figc[width=.22\textwidth]{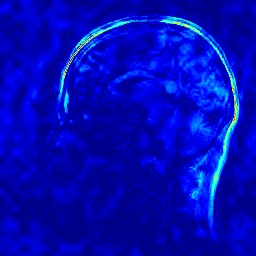}\\[-4.5cm]
{\small (c)} & {\small (d) SNR=24.1dB}\\[4.4cm]
\yaxis{$k_y$} \figc[width=.22\textwidth]{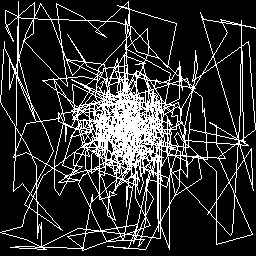} & \figc[width=.22\textwidth]{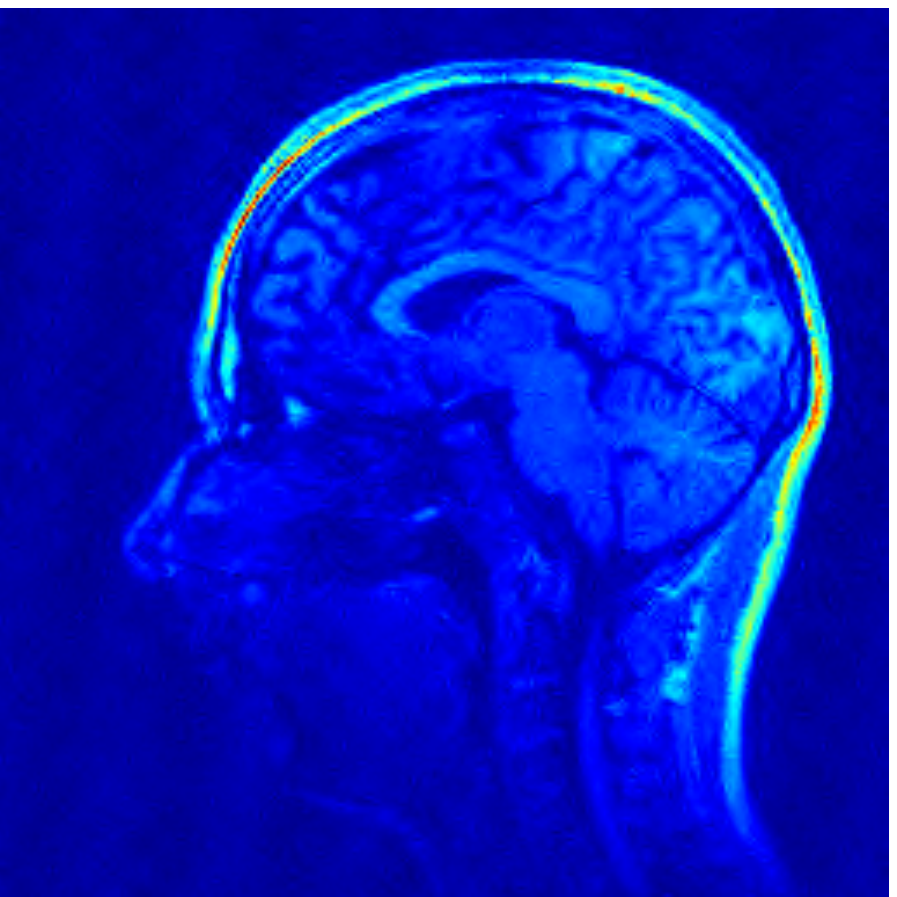}\\ 
 $k_x$ & \\[-4.9cm]
{\small (e)} & {\small (f) SNR=34.1dB}\\[4.6cm]
\end{tabular}\vspace*{-.5cm}
\end{center}
\caption{{\bf Left:} different sampling patterns (with an acceleration factor $r=5$). {\bf Right:} reconstruction results. From top to bottom: independent drawing from distribution $\tilde \pi$~(a), the same followed by a TSP solver~(c) and finally independent drawing
from distribution $\tilde \pi^2$ followed by a TSP solver.\vspace*{-.5cm}
\label{fig:Distributions}}
\end{figure}

\section{Conclusion}
Designing sampling patterns lying on continuous curves is central for practical applications such as MRI.
In this paper, we proposed and justified an original two-step approach based on a TSP solver to produce such continuous trajectories. 
It allows to emulate any variable density sampling strategy and could thus be used in a large variety of applications.
In the above mentioned MRI example, this method improves the signal-to-noise ratio by $10$dB compared to more naive approaches and provides results similar to those obtained using unconstrained sampling schemes.
From a theoretical point of view, we plan to assess the convergence rate of the empirical law of the  travelling salesman trajectory to the target distribution $\pi^{(d-1)/d}$. From a practical point of view, we plan to develop 
algorithms that integrate stronger constraints into account such as the maximal curvature of the sampling trajectory, which plays a key role in many applications. 

\section*{Acknowledgment}


The authors would like to thank the mission pour l'interdisciplinarit\'e from CNRS and the ANR SPHIM3D for partial support of Jonas Kahn's visit to Toulouse and the CIMI Excellence Laboratory for inviting Philippe Ciuciu on an excellence researcher position during winter 2013.




\footnotesize{
\bibliographystyle{IEEEbib}
\bibliography{bibSampTA_VC}
}
%
%


\end{document}